\newcommand{\commentout}[1]{}
\newtheorem{theorem}{Theorem}
\newtheorem{lemma}[theorem]{Lemma}
\newtheorem{proposition}[theorem]{Proposition}
\newcommand{\mysetminusD}{\hbox{\tikz{\draw[line width=0.6pt,line cap=round] (3pt,0) -- (0,6pt);}}}
\newcommand{\mysetminusT}{\mysetminusD}
\newcommand{\mysetminusS}{\hbox{\tikz{\draw[line width=0.45pt,line cap=round] (2pt,0) -- (0,4pt);}}}
\newcommand{\mysetminusSS}{\hbox{\tikz{\draw[line width=0.4pt,line cap=round] (1.5pt,0) -- (0,3pt);}}}
\newcommand{\mysetminus}{\mathbin{\mathchoice{\mysetminusD}{\mysetminusT}{\mysetminusS}{\mysetminusSS}}}
\newcommand{\defin}[1]{\textbf{#1}}
\newcommand{\lthen}{\rightarrow}
\newcommand{\defeq}{\coloneqq}
\newcommand{\val}[1]{[\![ #1 ]\!]}
\newcommand{\B}{\mathcal{B}}
\newcommand{\T}{\mathcal{T}}
\newcommand{\F}{\mathcal{F}}
\newcommand{\M}{\mathcal{M}}
\newcommand{\I}{\mathcal{I}}
\renewcommand{\P}{\mathcal{P}}
\newcommand{\A}{\mathcal{A}}
\renewcommand{\phi}{\varphi}
\newcommand{\citeyear}{\cite}
\newcommand{\rimp}{\Rightarrow}
\newcommand{\dimp}{\Leftrightarrow}
\newcommand{\PR}{\mathit{Pr}}
\newcommand{\FT}{\F_{\T}}
\newcommand{\LBT}{\mathcal{L}_{B}^{\Theta}}
\newcommand{\LBI}{\mathcal{L}_{B}^{[0,1]}}
\newcommand{\LBII}{\mathcal{L}_{B}^{\infty,[0,1]}}
\newcommand{\valM}[1]{\val{#1}_{\M}}
\newcommand{\valI}[1]{\val{#1}_{\I}}
\newcommand{\MI}{\M_{\I}}
\newcommand{\TM}{\T_{\M}}
\newcommand{\IM}{\I_{\M}}
\begin{document}

\title{From Type Spaces to Probability Frames and Back, via Language} 
\author{Adam Bjorndahl
  \institute{Carnegie Mellon University\\
  Department of Philosophy\\
  Pittsburgh, PA 23185, USA}
  \email{adam.bjorndahl@gmail.com}
\and
  Joseph Y. Halpern
\institute{Cornell University\\
Ithaca, NY 14853, USA}
  \email{halpern@cs.cornell.edu}
}

\def\titlerunning{From Type Spaces to Probability Frames and Back, via
  Language} \def\authorrunning{A. Bjorndahl \& J. Y. Halpern}

\maketitle

\begin{abstract}
We investigate the connection between the two major mathematical
frameworks for modeling interactive beliefs: \emph{Harsanyi type
  spaces} and possible-worlds--style \emph{probability frames}. While
  translating the former into the latter is straightforward, we
demonstrate that the reverse translation relies implicitly on a
background logical language. Once this ``language parameter'' is made
explicit, it reveals a close relationship between \emph{universal}
type spaces and \emph{canonical} models: namely, that they are
essentially the same construct. As the nature of a canonical model
depends heavily on the background logic used to generate it, this work
suggests a new view into a corresponding landscape of universal type
spaces. 
\end{abstract}

%
%




\section{Introduction}

There are two mathematical frameworks in widespread use for modeling
beliefs in multi-agent systems. One approach, popular among computer
scientists and logicians, utilizes the \emph{possible worlds}
paradigm (see, e.g., \cite{Hal31}). Roughly
speaking, a \emph{probability frame} consists of a set of
\emph{worlds}, each of which is associated with a set of probability
measures (one for each agent), defined on the set of worlds. These
probability measures are interpreted as encoding beliefs. Hierarchical
beliefs---for example, beliefs about what another agent believes---are
naturally captured by the recursive structure of this framework,
namely the fact that worlds encode beliefs about worlds. 
The second approach, more standard in game theory, uses
\emph{type spaces}, introduced by Harsanyi \citeyear{Harsanyi}.
Roughly speaking, types spaces are composed of
\emph{states}, encoding ``basic'' facts about the world (typically
including which strategies the players are using), together with
\emph{types}, encoding the beliefs of each player in the form of a
probability measure defined over the states and the types of her
opponents. 

What is the relationship between probability frames and type spaces?
Aside from a few measure-theoretic technicalities, it is relatively
straightforward to transform a type space into a probability frame:
essentially, the worlds are state-type pairs. 
Reversing this transformation is not so straightforward. Given a
probability frame,
the key question is how to ``factor'' worlds into states and types.
Probability frames encode beliefs about worlds, beliefs about beliefs
about worlds, and so on, but this never ``bottoms out'' in anything
like the states in a type space.  
That is, there is no obvious component of a world that encodes facts 
such as what strategies the agents are using or the value participants in
an auction might assign to an item up for bid.
Thus, there seems to be a mismatch between the two approaches.

In this paper, we resolve this mismatch by adding a \emph{language}---a
set of basic facts (such as what strategy is used by each agent),
represented by primitive propositions---to
the picture. In the terminology of modal logic, we
pass from frames to models.
Given a language, a model is simply a frame together with an
\emph{interpretation} that determines for each world $w$ and primitive
proposition $p$ in the language whether $p$ is true in world $w$.
But then we must decide which language to use.
We show that the right choice of language can provide exactly the
additional structure needed to ``cleanly'' factor worlds into states
and types. Specifically, we define a transformation on probability
models that takes language as a parameter, and show that it produces
the familiar type space construction when the language is
appropriately expressive. 

The value of forging such a connection between the
two major mathematical frameworks for modeling belief is obvious:
improved communication between researchers working in these respective
traditions, and the prospect of importing insights and results from
one paradigm to another.
And indeed, one immediate application of our language-sensitive
translation is a link between two fundamental notions: that of a
\emph{canonical} model from the world of modal logic (see, e.g.,
\cite{BRV01}) and that of a \emph{universal} type space from the
theoretical economics literature \cite{MZ}. 
Each of these constructions plays a central role in the subfield to
which it belongs, and these roles are very similar: each is, in a
precise sense, the ``largest'' structure of its kind---a structure
that essentially contains all other such structures. It is perhaps not
surprising that they are effectively the \textit{same} structure:
roughly speaking, we show in Section \ref{sec:trs} that canonical
models are transformed into universal type spaces.%
\footnote{We remark that Meier \citeyear{Meier12} already observed
  this connection in the case of an infinitary language.}
Moreover, since
canonical model constructions are highly sensitive to the underlying
logical language, this result suggests a new view into a landscape of
universal type spaces parametrized by language. 

Much of this work was inspired by a beautiful paper of Heifetz and
Samet \citeyear{HeSa98}. 
In it, they construct a measure-theoretic universal type space by a
process that closely mimics a standard canonical-model construction
(though they do not describe it that way). Our work can be viewed as
generalizing their construction to produce a translation from
\textit{arbitrary}
probability frames to type spaces; our Theorem \ref{thm:can} is
then the special case of applying this translation to the canonical model
associated with a certain specific logic. In order to emphasize this
connection, much of the notation and terminology of this paper
duplicates or parallels that used by Heifetz and Samet.

In fact, our ``canonical model'' construction differs
in small but significant ways
from the standard construction in modal logic. Typically,
worlds in the canonical model are realized as maximal
\textit{consistent} sets of formulas from the language, where
consistency is, of course, defined relative to some background axiom
system. However, the standard finitary axiom system used to reason
about probability frames has a
problem, namely, it is not
\emph{compact}: there exists an infinite set $F$ of formulas that is 
not satisfiable such that every finite subset of $F$ is satisfiable
(which means that $F$ is consistent with the axioms).
This renders the
corresponding canonical model not a model at all. To avoid this issue,
we replace ``consistency'' with ``satisfiability'' in our canonical
model construction. (Aumann \citeyear{Aumann99a} uses an
analogous construction.)
Meier~\citeyear{Meier12} considers an alternative approach: 
changing the axiom system. Specifically, he considers an
\textit{infinitary} axiom system (with infinitary rules of inference)
with respect to which consistency and satisfiability coincide, and
constructs a universal type space using a canonical model style
construction over this infinitary logic.
Although Meier's logic is infinitary (he allows uncountable
conjunctions and disjunctions) and our language is finitary, his
canonical model is essentially isomorphic to ours
(see Section~\ref{sec:uni} for further discussion).%
\footnote{We thank Martin Meier for pointing this important connection between
  our work, his work, and that of Aumann.}
Conceptually, however, our goals are somewhat different from
those of Aumann and Meier.
Aumann and Meier focus on the construction of the canonical model.  By
way of contrast, 
we approach the issue as a problem first of how
to transform an arbitrary probability frame into a type space, and
observe afterwards that this translation connects a (suitably defined)
notion of canonical model to that of a universal type space.

We are not the first to study the general relationship between type
spaces and
possible-worlds--style structures.
One connection via logic is well known.  Sound and complete
axiomatizations have been provided for various logics of probability:
Heifetz and Mongin \cite{HeifetzM01} considered a finitary logic where
the basic statements have the form $B_i^\theta 
\phi$ (agent $i$ believes that the probability of $\phi$ is at least
$\theta$)---this is the same logic that we consider---and provided a
sound and complete axiomatization in their logic for type systems; Meier
\citeyear{Meier12} did the same for an infinitary logic.  Since the
axioms are easily seen to be sound in probability frames, and 
every type structure can be viewed as a probability frame,
soundness and completeness of these axiomatizations for
probability frames follows.  Fagin, Halpern, and Megiddo
\cite{FHM,FH3} provided a sound and complete axiomatization of a
logic that allowed reasoning about linear combinations
of probabilities (i.e., statements such as $2\ell_i(\phi) + 3 \ell_i(\psi)
\ge 1.5$, which can be read as ``twice agent $i$'s probability of $\phi$
plus three times agent $i$'s probability of $\psi$ is at least 1.5'') in
probability frames.  Since their axioms are easily seen to be
sound in type spaces and statements about linear combinations can be
expresssed in Meier's infinitary logic, it follows that this
axiomatization is also complete for type spaces.

The work on axiomatizations does not
produce an explicit translation
between type spaces and possible-worlds structures.  In more recent
work, Galeazzi and Lorini \citeyear{GL16}
develop a translation between the two and prove a semantic equivalence
result. They, too, work at the level of models rather than frames
(though they do not explicitly discuss this choice); however, their
translations are defined model-theoretically with respect to a single
fixed language, rather than taking language as a parameter, making the
approach we develop more flexible and more broadly applicable.
While the translation they propose from (what we call) probability
models into type spaces is not a special case of
ours, it is similar in spirit. However, there is one significant
difference: in passing through language, our approach 
effectively identifies worlds that satisfy all the same formulas,
while theirs does not (in particular, ``duplicate'' worlds produce
duplicate types under their translation, but not under
ours). Semantically speaking, provided we fix an
appropriately expressive language, the type spaces we produce are
equivalent, once  we identify types that satisfy the same formulas.
By varying the language, however, our translations take on
different characters---they preserve more or less of the type space
structure in accordance with what is expressible in the 
language. 
Moreover, Galeazzi and Lorini restrict their attention to
countable structures, which effectively precludes
consideration of
structures like universal type spaces or canonical models.

The rest of the paper is organized as follows. Section \ref{sec:pre}
presents the basic mathematical frameworks within which we
work. Section \ref{sec:tra} motivates and defines the translations
from type spaces to probability frames and vice-versa. Section
\ref{sec:uni} presents the connection between universal type spaces
and canonical models discussed above.
Section \ref{sec:con} concludes.
Some proofs have been omitted or abridged due to length requirements.

\section{Preliminaries} \label{sec:pre}

The definition of a type space typically includes various topological
assumptions that make it easier to prove certain results of interest
within that framework \cite{DS15}.
Since our goal is to understand the connection between type spaces and
probability frames, we opt instead to work in as minimal a setting as
possible, so as not to obscure the translations between the two with
additional topological bookkeeping. In particular, following Heifetz
and Samet \citeyear{HeSa98},
we work with a purely measure-theoretic definition of types spaces.

A \defin{measurable space} is a set $X$ together with a $\sigma$-algebra $\Sigma_{X}$ over $X$; elements of $\Sigma_{X}$ are called \defin{measurable sets} or \defin{events}. We often drop explicit mention of $\Sigma_{X}$ and refer simply to ``the measurable space $X$''. We denote by $\Delta(X)$ the measurable space of all probability measures on $X$ equipped with the $\sigma$-algebra generated by all sets of the form
$$\B^{\theta}(E) \defeq \{\mu \in \Delta(X) \: : \: \mu(E) \geq \theta\},$$
where $\theta \in [0,1]$ and $E \in \Sigma_{X}$ is an event. Given measurable spaces $X_{1}, \ldots, X_{k}$, the measurable space $X_{1} \times \cdots \times X_{k}$ is just the usual product space equipped with the $\sigma$-algebra generated by all sets of the form $E_{1} \times \cdots \times E_{k}$, where each $E_{i} \in \Sigma_{X_{i}}$.

Given a probability measure $\mu$ on $X$, the associated \defin{outer measure}, denoted $\mu^{*}$, is defined on arbitrary subsets of $X$ as follows:
$$\mu^{*}(A) \defeq \inf\{\mu(E) \: : \: E \in \Sigma_{X} \textrm{ and } E \supseteq A\}.$$
Obviously, if $A \in \Sigma_{X}$, then $\mu^{*}(A) =
\mu(A)$. Otherwise, if $A$ is not a measurable set, the outer measure
of $A$ can be thought of as a kind of approximation of
the measure of $A$ from above: every event containing $A$ has
probability at least
$\mu^{*}(A)$, and for all $\varepsilon > 0$, there is an event $E
\supseteq A$ with $\mu(E) - \mu^{*}(A) < \varepsilon$. 

Fix a finite set $I = \{1, \ldots, n\}$ of agents. We adopt the usual
notational game-theoretic conventions for tuples over $I$: Given
$(X_{i})_{i \in I}$, we write 
$$X \defeq \prod_{i \in I} X_{i} \quad \textrm{and} \quad X_{-i} \defeq \prod_{j \neq i} X_{j}.$$
We also write $X_{i}' \times X_{-i}$ for
$$X_{1} \times \cdots \times X_{i-1} \times X_{i}' \times X_{i+1} \times \cdots \times X_{n}$$
and similarly $(x_{i}', x_{-i})$ for
$$(x_{1}, \ldots, x_{i-1}, x_{i}', x_{i+1}, \ldots, x_{n}).$$

A \defin{type space (over $I$)} is a tuple $\T = (X, (T_{i})_{i \in I}, (\beta_{i})_{i \in I})$ where
\begin{itemize}
\item
$X$ is a measurable space of \emph{states};
\item
$T_{i}$ is a measurable space of \emph{$i$-types};
\item
$\beta_{i}: T_{i} \to \Delta(X \times T)$ is a measurable function such that the marginal of $\beta_{i}(t_{i})$ on $T_{i}$ is $\delta_{t_{i}}$, the point-mass measure concentrated on $t_{i}$.
\end{itemize}
Intuitively, $X$ captures the basic facts about which the agents may be uncertain, while $i$-types represent the beliefs of agent $i$ via the function $\beta_{i}$. These beliefs are not just about the states, but also about the types (and therefore the beliefs) of the agents. In this context, the requirement that $\beta_{i}$ be measurable can be thought of as a closure condition on events: for all events $E \subseteq X \times T$, the set of points where agent $i$ assigns $E$ probability at least $\theta$, namely
$$X \times \beta_{i}^{-1}(\B^{\theta}(E)) \times T_{-i},$$
is itself an event. The extra condition on $\beta_{i}$ is meant to ensure that agent $i$ is \emph{introspective}: that is, sure of her own beliefs. The point-mass measure $\delta_{t_{i}}$ is defined on the measurable subsets of $T_{i}$ by
$$
\delta_{t_{i}}(E) = \left\{ \begin{array}{ll}
1 & \textrm{if $t_{i} \in E$}\\
0 & \textrm{if $t_{i} \notin E$.}
\end{array} \right.
$$
Thus, $\delta_{t_{i}}$ assigns probability 1 to all and only the
events containing $t_{i}$. Note that in general we cannot simply say
that $\{t_{i}\}$ has probability 1 according to agent $i$, since
$\{t_{i}\}$ may not be measurable; instead, we can say that every
event incompatible with $t_{i}$ has probability 0 according to agent
$i$.\footnote{This subtlety does not typically arise in the richer
  topological setting: provided $T_{i}$ is a $T_{1}$-space
(see, e.g., \cite{Munkres};
  there is
  an unfortunate clash of notation here), $\{t_{i}\}$ is closed and
  therefore part of the Borel $\sigma$-algebra associated with
  $T_{i}$.} Equivalently, $\delta_{t_{i}}$ is the unique probability
measure on $T_{i}$ that assigns $\{t_{i}\}$ outer measure $1$.  
A \defin{probability frame (over $I$)} is a tuple $\F = (\Omega, (\PR_{i})_{i \in I})$ where
\begin{itemize}
\item
$\Omega$ is a measurable space of \emph{worlds};
\item
$\PR_{i}: \Omega \to \Delta(\Omega)$ is a measurable function such that, for each $\omega \in \Omega$, $\PR_{i}(\omega)^{*}(\PR_{i}^{-1}(\PR_{i}(\omega))) = 1$.
\end{itemize}
Here, all information is encoded in $\Omega$, basic facts and beliefs alike. As with type spaces, the measurability of $\PR_{i}$ yields a closure condition on events: for all events $E \subseteq \Omega$, the set of points where agent $i$ assigns $E$ probability at least $\theta$ is given by $\PR_{i}^{-1}(\B^{\theta}(E))$ and is therefore measurable. And as above, the additional condition on $\PR_{i}$ amounts to the stipulation that agent $i$ is sure of her own beliefs in the sense that at each world $\omega$, $\PR_{i}(\omega)$ assigns outer measure $1$ to the set
$$\PR_{i}^{-1}(\PR_{i}(\omega)) = \{\omega' \: : \: \PR_{i}(\omega') = \PR_{i}(\omega)\},$$
namely, the set of worlds where her beliefs are given by the measure
$\PR_{i}(\omega)$. If this set is measurable, of course, then it is
itself assigned probability $1$. In much of the literature
the measurability of this set is simply assumed. We adopt the slightly
more cumbersome definition given above using outer measure because it
is more general and because it parallels the introspection
condition assumed in type spaces in a way that helps to streamline the
translation between the two. 

\section{Translations} \label{sec:tra}

Informally, a type space looks like a probability frame where the set of worlds $\Omega$ has been ``factored'' into a component representing basic facts---the states---and components representing the beliefs of the agents---the types. As discussed in the introduction,
given a probability frame, it is not clear how to perform such a
factorization; most of this section is concerned with developing a
solution to this problem. The reverse construction, on the other hand,
is straightforward, so we begin with it. 

\commentout{
}

\begin{proposition} \label{pro:t-f}
Let $\T = (X, (T_{i})_{i \in I}, (\beta_{i})_{i \in I})$ be a type space, and define $\Omega \defeq X \times T$ and $\PR_{i}(x,t) \defeq \beta_{i}(t_{i})$. Then $\FT \defeq (\Omega, (\PR_{i})_{i \in I})$ is a probability frame.
\end{proposition}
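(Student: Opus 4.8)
The plan is to verify the two clauses in the definition of a probability frame. That $\Omega \defeq X \times T$ is a measurable space is immediate, since it is a product of measurable spaces; and $\PR_i(x,t) \defeq \beta_i(t_i)$ is well defined as a map into $\Delta(\Omega)$ because $\beta_i(t_i) \in \Delta(X \times T) = \Delta(\Omega)$. Note that $\PR_i(x,t)$ depends only on the coordinate $t_i$. It then remains to show (i) that each $\PR_i : \Omega \to \Delta(\Omega)$ is measurable, and (ii) that the introspection condition $\PR_i(\omega)^{*}(\PR_i^{-1}(\PR_i(\omega))) = 1$ holds at every $\omega \in \Omega$.

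First I would dispatch measurability, which is routine. Since the $\sigma$-algebra on $\Delta(\Omega)$ is generated by the sets $\B^{\theta}(E)$, it suffices to check that $\PR_i^{-1}(\B^{\theta}(E))$ is an event for every $E \in \Sigma_{\Omega}$ and $\theta \in [0,1]$. Unwinding the definitions,
$$\PR_i^{-1}(\B^{\theta}(E)) = \{(x,t) : \beta_i(t_i)(E) \geq \theta\} = X \times \beta_i^{-1}(\B^{\theta}(E)) \times T_{-i},$$
and this is a product of measurable sets, the crucial $T_i$-factor being measurable precisely because $\beta_i$ is.

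The substance of the proof is the introspection condition. Writing $\nu \defeq \PR_i(x,t) = \beta_i(t_i)$ and letting $\pi : \Omega \to T_i$ be projection onto the $T_i$ coordinate, I would observe that
$$\PR_i^{-1}(\PR_i(x,t)) = \{(x',t') : \beta_i(t_i') = \beta_i(t_i)\} \supseteq \pi^{-1}(\{t_i\}) = X \times \{t_i\} \times T_{-i}.$$
By monotonicity of outer measure, and since $\nu$ is a probability measure (so the quantity is at most $1$), it therefore suffices to prove the single inequality $\nu^{*}(\pi^{-1}(\{t_i\})) \geq 1$: that the fiber over $t_i$ carries full outer measure.

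This last step is the main obstacle, and it is a genuine measure-theoretic fact rather than a formal manipulation. The idea is to transfer the type-space introspection condition---that the marginal $\mu$ of $\nu$ on $T_i$ is $\delta_{t_i}$, equivalently that $\mu^{*}(\{t_i\}) = 1$---up to the product space. Accordingly I would isolate the lemma that, for any $\nu \in \Delta(\Omega)$ with $T_i$-marginal $\mu$ and any $S \subseteq T_i$, one has $\nu^{*}(\pi^{-1}(S)) = \mu^{*}(S)$; instantiating it at $S = \{t_i\}$ gives $\nu^{*}(\pi^{-1}(\{t_i\})) = \delta_{t_i}^{*}(\{t_i\}) = 1$, as needed. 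The ``$\leq$'' half is easy: any measurable $E \supseteq S$ pulls back to a measurable $\pi^{-1}(E) \supseteq \pi^{-1}(S)$ with $\nu(\pi^{-1}(E)) = \mu(E)$, so taking infima yields $\nu^{*}(\pi^{-1}(S)) \leq \mu^{*}(S)$. The ``$\geq$'' direction is the delicate point and the place I expect to spend effort: because $\{t_i\}$ need not be measurable---indeed it may have outer measure $1$ and inner measure $0$---the naive cover-pullback argument fails, and one must exploit the product structure of $\Sigma_{\Omega}$ to produce, from any measurable $G$ disjoint from the $t_i$-fiber, a genuinely measurable subset of $T_i$ avoiding $t_i$ that captures $\pi(G)$. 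This is exactly the subtlety that the outer-measure formulation of introspection was introduced to handle, and it is the measure-theoretic heart of the Heifetz--Samet-style construction that the rest of the paper generalizes; once it is in hand, everything else is bookkeeping.
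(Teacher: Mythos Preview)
Your plan is sound and tracks the paper's proof closely: the paper also checks measurability via $\PR_i^{-1}(\mathcal{E}) = X \times \beta_i^{-1}(\mathcal{E}) \times T_{-i}$ and then argues introspection directly. In fact the paper is \emph{less} careful than you at the key step: it asserts that ``every measurable set containing $\{(x',t') : \beta_i(t_i') = \beta_i(t_i)\}$ is of the form $X \times U \times T_{-i}$ with $U \ni t_i$ measurable,'' which is not literally true (union any such cylinder with a nontrivial measurable rectangle and you get a measurable cover that is not a cylinder). What \emph{is} true, and what both arguments need, is that every measurable $G$ containing the $t_i$-fiber satisfies $\beta_i(t_i)(G)=1$; this is precisely the ``$\geq$'' inequality you isolate.

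One caution about your sketched attack on that step. Producing, from a measurable $G$ disjoint from the fiber, a measurable $U \subseteq T_i$ avoiding $t_i$ that ``captures $\pi(G)$'' runs into the usual trouble that projections of product-measurable sets need not be measurable, and when $\{t_i\}$ itself is not measurable there may be no measurable set separating $t_i$ from $\pi(G)$ at all. A cleaner route avoids projections entirely: show by a $\pi$--$\lambda$ argument that for every measurable $G \subseteq X \times T$ one has $\beta_i(t_i)(G)=\rho(G_{t_i})$, where $\rho$ is the marginal of $\beta_i(t_i)$ on $X \times T_{-i}$ and $G_{t_i}$ is the $t_i$-section. The class of such $G$ is a $\lambda$-system and contains every measurable rectangle (using only that the $T_i$-marginal is $\delta_{t_i}$), hence equals $\Sigma_{X \times T}$. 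Then $G \supseteq \pi^{-1}(\{t_i\})$ forces $G_{t_i}=X\times T_{-i}$ and $\beta_i(t_i)(G)=1$, and your reduction via monotonicity of outer measure finishes the proof; the general lemma $\nu^{*}(\pi^{-1}(S))=\mu^{*}(S)$ is not needed.
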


\begin{proof}
This is the obvious construction; all that needs to be checked is that $\PR_{i}$ satisfies the appropriate conditions. Measurability of this function is an easy consequence of the measurability of $\beta_{i}$, since $\PR_{i}^{-1}(\mathcal{E}) = X \times \beta_{i}^{-1}(\mathcal{E}) \times T_{-i}$. For introspection, observe that
$$\PR_{i}(x,t)^{*}(\PR_{i}^{-1}(\PR_{i}(x,t))) =
\beta_{i}(t_{i})^{*}(\{(x',t') \: : \: \beta_{i}(t_{i}') =
\beta_{i}(t_{i})\})
= 1,
$$
since every measurable set containing $\{(x',t') \: : \: \beta_{i}(t_{i}') = \beta_{i}(t_{i})\}$ is of the form $X \times U \times T_{-i}$, where $U \subseteq T_{i}$ is measurable and contains $t_{i}$.
\end{proof}

In what sense is $\FT$ the ``right'' translation of $\T$? Intuitively, we want to say that the relevant properties of agents and their beliefs that are captured by $\T$ are also captured by $\FT$, and in some sense preserved by this translation. To make this precise, we formalize the notion of ``relevant properties'' by identifying them with formulas in a suitably expressive logical language; we then show that the map $\T \mapsto \FT$ is truth-preserving with respect to this language (Proposition \ref{pro:t-m}). In addition to providing a formal standard by which to evaluate purported translations between models, making the background language explicit lays the groundwork for the reverse translation, which makes essential use of this structure.

\subsection{Language} \label{sec:lan}

Fix a set $\Phi$ of \emph{primitive propositions} and a set $\Theta
\subseteq [0,1]$ of \emph{thresholds}; let $\LBT(\Phi, I)$ be the 
language recursively generated by the grammar 
$$\phi ::= p \, | \, \lnot \phi \, | \, \phi \land \psi \, | \, B_{i}^{\theta} \phi,$$
where $p \in \Phi$, $i \in I$, and $\theta \in \Theta$. The parameters
$\Phi$ and $I$ are omitted when they are clear from context. The other
Boolean connectives can be defined in the standard way. We read
$B_{i}^{\theta} \phi$ as ``agent $i$ believes that the probability of
$\phi$ is at least $\theta$''. Intuitively, $\Theta$ collects the set
of thresholds that the language can express beliefs up to. 

There is a standard way of interpreting formulas of $\LBT(\Phi, I)$ in
probability frames. A \defin{probability model (over $(\Phi, I)$)} is
a tuple $\M = (\F, \pi)$ where $\F$ is a probability frame (over $I$)
and $\pi: \Phi \to \Sigma_{\Omega}$ is an \emph{interpretation}. Recall that
$\Sigma_{\Omega}$ denotes the $\sigma$-algebra associated with the
measurable space $\Omega$; the event $\pi(p) \subseteq \Omega$ is
conceptualized as the set of worlds where the primitive proposition
$p$ is true. We can extend this notion of truth to all formulas by
defining $\valM{\cdot}: \LBT \to \Sigma_{\Omega}$ recursively as
follows: 
\begin{eqnarray*}
\valM{p} & = & \pi(p)\\
\valM{\lnot \phi} & = & \Omega \mysetminus \valM{\phi}\\
\valM{\phi \land \psi} & = & \valM{\phi} \cap \valM{\psi}\\
\valM{B_{i}^{\theta} \phi} & = & \{\omega \in \Omega \: : \: \PR_{i}(\omega)(\valM{\phi}) \geq \theta\}.
\end{eqnarray*}
Of course, the final clause of this definition only makes sense if $\valM{\phi}$ is measurable, which follows from an easy induction on formulas using the fact that
$$\valM{B_{i}^{\theta} \phi} = \PR_{i}^{-1}(\B^{\theta}(\valM{\phi})).$$

We say that a formula $\phi$ is \defin{true at $\omega$ (in $\M$)} if
$\omega \in \valM{\phi}$, and that a set $F$ of formulas is true at
$\omega$ if each $\phi \in F$ is true at $\omega$. A formula or set of
formulas is \defin{valid in $\M$} if it is satisfied at all
worlds in $\M$, and \defin{satisfiable in $\M$} if it is true at some
world in $\M$; it is \defin{valid} if it is valid in all
probability models, and \defin{satisfiable} if it is satisfiable in
some probability model. 

It is worth noting that the introspection condition on frames, which says that every event containing $\PR_{i}^{-1}(\PR_{i}(\omega))$ has probability $1$ according to $\PR_{i}(\omega)$, allows us to deduce the following for all probability models $\M$ (assuming $1 \in \Theta$):
\begin{eqnarray*}
\omega \in \valM{B_{i}^{\theta} \phi} & \rimp & \PR_{i}^{-1}(\PR_{i}(\omega)) \subseteq \valM{B_{i}^{\theta} \phi}\\
& \rimp & \PR_{i}(\omega)(\valM{B_{i}^{\theta} \phi}) = 1\\
& \rimp & \omega \in \valM{B_{i}^{1} B_{i}^{\theta} \phi}.
\end{eqnarray*}
This implies that the formula $B_{i}^{\theta} \phi \lthen B_{i}^{1} B_{i}^{\theta} \phi$ is valid: whenever agent $i$ believes the probability of $\phi$ is at least $\theta$, she is sure that she has this belief. A similar argument shows that $\lnot B_{i}^{\theta} \phi \lthen B_{i}^{1} \lnot B_{i}^{\theta} \phi$ is valid. Of course, this also follows from the stronger assumption that $\PR_{i}^{-1}(\PR_{i}(\omega))$ is itself measurable and has probability $1$, but relative to this logical language, such an assumption is overkill.

We can also interpret $\LBT(\Phi, I)$ in type spaces. Although this is
not typically done in the literature (though Galeazzi and Lorini
\citeyear{GL16} do),
it allows us to state formally the connection between $\T$ and $\FT$
as defined in Proposition \ref{pro:t-f}, and it highlights the
analogies between type spaces and probability frames that we exploit
below. 

An \defin{interpreted type space (over $(\Phi, I)$)} is a pair $\I =
(\T, \nu)$ where $\T$ is a type space and $\nu: \Phi \to
\Sigma_{X}$ is an \emph{interpretation}; intuitively, $\nu(p)$
specifies the states of nature where $p$ is true. As above, $\nu$
induces a function $\valI{\cdot}: \LBT \to \Sigma_{X \times T}$ as
follows: 
\begin{eqnarray*}
\valI{p} & = & \nu(p) \times T\\
\valI{\lnot \phi} & = & (X \times T) \mysetminus \valI{\phi}\\
\valI{\phi \land \psi} & = & \valI{\phi} \cap \valI{\psi}\\
\valI{B_{i}^{\theta} \phi} & = & \{(x,t) \in X \times T \: : \: \beta_{i}(t_{i})(\valI{\phi}) \geq \theta\}.
\end{eqnarray*}

Now we can formalize the sense in which the map $\T \mapsto \FT$ is
truth-preserving. 

\begin{proposition} \label{pro:t-m}
Let $\I = (\T, \nu)$ be an interpreted type space, and let $\FT$ be
the probability frame corresponding to $\T$ as defined in Proposition
\ref{pro:t-f}. Define $\pi(p) \defeq \nu(p) \times T$. Then $\MI
\defeq (\FT, \pi)$ is a probability model, and for all $\phi \in
\LBT$, we have $\val{\phi}_{\MI} = \valI{\phi}$. 
\end{proposition}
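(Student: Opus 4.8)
The plan is to establish the two claims separately. First I would verify that $\MI$ is a genuine probability model. Since $\FT$ is already a probability frame by Proposition \ref{pro:t-f}, all that remains is to check that $\pi$ is a legitimate interpretation, i.e., that $\pi(p) \in \Sigma_{\Omega}$ for each $p \in \Phi$. This is immediate: $\pi(p) = \nu(p) \times T$, and since $\nu(p) \in \Sigma_{X}$ and $T$ is itself measurable, the product $\nu(p) \times T$ lies in the product $\sigma$-algebra $\Sigma_{X \times T} = \Sigma_{\Omega}$.

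The substantive claim, that $\val{\phi}_{\MI} = \valI{\phi}$ for all $\phi \in \LBT$, I would prove by structural induction on $\phi$. The point that makes the induction go through cleanly is that both valuation functions take values in the \emph{same} $\sigma$-algebra $\Sigma_{X \times T}$, and that the frame $\FT$ was constructed precisely so that $\PR_{i}(x,t) = \beta_{i}(t_{i})$; this is exactly what forces the two semantic clauses for the belief operator to coincide once the inductive hypothesis is applied.

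For the base case $\phi = p$, both sides equal $\nu(p) \times T$ by the definitions of $\pi$ and of $\valI{\cdot}$. The Boolean cases are purely set-theoretic: for negation both definitions subtract the valuation of the immediate subformula from $\Omega = X \times T$, and for conjunction both take intersections, so each follows from the inductive hypothesis applied to the relevant subformulas. For the modal case $\phi = B_{i}^{\theta} \psi$, I would unwind both definitions. On the model side, $\val{B_{i}^{\theta} \psi}_{\MI} = \{(x,t) \in X \times T : \PR_{i}(x,t)(\val{\psi}_{\MI}) \geq \theta\}$, and substituting $\PR_{i}(x,t) = \beta_{i}(t_{i})$ rewrites this as $\{(x,t) : \beta_{i}(t_{i})(\val{\psi}_{\MI}) \geq \theta\}$; by the inductive hypothesis $\val{\psi}_{\MI} = \valI{\psi}$, this is exactly $\valI{B_{i}^{\theta} \psi}$.

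I do not expect any genuine obstacle: the result is essentially a bookkeeping verification that the two valuation functions were defined in parallel. The only point requiring a moment's care is that the modal clause in each definition presupposes that the valuation of the subformula is measurable, so that applying the measure $\beta_{i}(t_{i})$ (equivalently, forming $\B^{\theta}$) is well-defined. This measurability is guaranteed at every stage by the recursive definitions themselves, as already noted in the discussion of $\valM{\cdot}$, and the identity $\PR_{i}(x,t) = \beta_{i}(t_{i})$ ensures that the two numbers being compared to $\theta$ are literally equal.
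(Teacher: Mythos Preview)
Your proposal is correct and follows exactly the approach the paper takes: verify that $\pi(p) \in \Sigma_{X \times T}$ so $\MI$ is a probability model, then establish $\val{\phi}_{\MI} = \valI{\phi}$ by a routine structural induction on $\phi$. The paper merely asserts that the base case follows from the definition of $\pi$ and that the inductive steps are trivial, whereas you have spelled out the details (in particular the use of $\PR_{i}(x,t) = \beta_{i}(t_{i})$ in the modal clause), but the argument is the same.
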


\begin{proof}
Proposition \ref{pro:t-f} tells us that $\FT$ is a probability frame,
and since $\nu(p) \in \Sigma_{X}$, it is clear that $\pi(p) \in
\Sigma_{X \times T}$; it follows that $\MI$ is a probability model. 

The equality $\val{\phi}_{\MI} = \valI{\phi}$ is proved by an easy structural induction on $\phi$. The base cases where $\phi \in \Phi$ follows from the definition of $\pi$, and the induction steps are all trivial.
\end{proof}

Proposition \ref{pro:t-m} is parametrized by the choice of primitive
propositions $\Phi$ and the interpretation $\nu$: it says that for
any such choice, the correspondence $\T \mapsto \FT$ can be extended
to a correspondence $\I \mapsto \MI$ that is truth preserving with
respect to the language $\LBT(\Phi)$. It is worth emphasizing a
special case of this result. Given a type space
$\T = (X, (T_{i})_{i \in I}, (\beta_{i})_{i \in I})$,
recall that the set $X$ of states
is often conceptualized as representing the ``basic facts'' about the
game; for example, the strategy profiles that may be played. As such,
when $X$ is finite (or even just when $\Sigma_{X}$ contains all
singletons), it is natural to take $\Phi = X$ and define $\nu(x) =
\{x\}$; in this case, intuitively, the primitive propositions simply
say what the true state is.

\subsection{Factoring worlds} \label{sec:fac}

We turn now to the reverse translation: the construction of a suitable
type space from a given probability frame. As we have observed, the
difficulty lies in ``factoring'' worlds into states and types. Given a
probability frame $\F = (\Omega, (\PR_{i})_{i \in I})$, we might hope
to identify types for player $i$ with probability measures of the form
$\PR_{i}(\omega)$ for $\omega \in \Omega$, but what are the states?
This is the crux of the problem: there is nothing in the definition of
$\F$ that allows us to distinguish the ``part'' of a world $\omega$ that represents basic facts; indeed, there is no notion of a
``basic fact'' at all in a probability frame. 

A sufficiently rich logical language, however, such as $\LBT$, \textit{does} distinguish ``basic'' facts from facts about beliefs. For this reason, the construction of a type space naturally operates at the level of probability \textit{models} (which can interpret languages) rather than frames, and depends crucially on the background language.

An \defin{$\LBT$-description} is a set $D \subseteq \LBT$ of formulas 
that is satisfiable and also \emph{maximal} in the sense that, for
each $\phi \in \LBT$, either $\phi \in D$ or $\lnot \phi \in D$. Given
a probability model $\M$ and a world $\omega$ in $\M$, define the
\defin{$\LBT$-description of $\omega$ in $\M$} to be 
$$D(\omega) \defeq \{\phi \in \LBT \: : \: \omega \in \valM{\phi}\}.$$
We omit mention of the language and the model when it is safe to do
so. It is easy to see that $D(\omega)$ is an $\LBT$-description; we
call $D$ the \emph{description map} for $\M$. Intuitively, $D(\omega)$
records all the information about the world $\omega$ expressible in the language
$\LBT$. Let $d_{0}(\omega)$ denote the subset of $D(\omega)$
consisting of the \emph{purely propositional} formulas: that is,
Boolean combinations of the primitive propositions. Let
$d_{i}(\omega)$ consist of the formulas in $D(\omega)$ that are
Boolean combinations of formulas of the form $B_{i}^{\theta}
\phi$. Call these the \defin{$0$-description} and the
\defin{$i$-description} of $\omega$, respectively. We think of the
former as recording the basic facts about $\omega$ (expressible in
$\LBT$), and the latter as recording the beliefs of agent $i$ in
$\omega$ (again, expressible in $\LBT$). 

Fix a probability model $\M = ((\Omega,(\PR_{i})_{i \in I}),\pi)$. We
construct a type space out of $\M$ by 
identifying states with $0$-descriptions and $i$-types with
$i$-descriptions. Formally, set 
$$
\textrm{$X \defeq \{d_{0}(\omega) \: : \: \omega \in \Omega\}$ and $T_{i} \defeq \{d_{i}(\omega) \: : \: \omega \in \Omega\}$.}
$$
Intuitively, each state and each type is constituted by a fragment of information about some world $\omega$ in $\M$. We also use this information to define the measure structure: for each $\phi \in \LBT$, set
$$
\textrm{$E_{0}(\phi) \defeq \{x \in X \: : \: \phi \in x\}$ and $E_{i}(\phi) \defeq \{t_{i} \in T_{i} \: : \: \phi \in t_{i}\}$;}
$$
we consider $X$ and $T_{i}$ as measurable spaces equipped with the $\sigma$-algebras generated by the collections $\{E_{0}(\phi) \: : \: \phi \in \LBT\}$ and $\{E_{i}(\phi) \: : \: \phi \in \LBT\}$, respectively.

The reason we use formulas to pick out events is because, ultimately, we will define each probability measure $\beta_{i}(t_{i})$ on $X \times T$ using the information encoded in $t_{i}$ about the likelihoods of formulas. For example, if $B_{i}^{\theta}\phi \in t_{i}$, this tells us that $\beta_{i}(t_{i})$ must assign probability at least $\theta$ to the subset of $X \times T$ where $\phi$ holds. Of course, in order to make sense of this, we must first define the event in $X \times T$ that corresponds to $\phi$.

As a first step toward this, we show that given a state-type tuple
$(x,t) \in X \times T$, the collection of formulas obtained by taking
the union of all these partial descriptions, namely $x \cup
\bigcup_{i} t_{i}$, is satisfiable. It is obvious that every
$0$-description $x \in X$ and $i$-description $t_{i} \in T_{i}$ is
individually satisfiable since, by definition, each is satisfied at
some world in $\M$. On the other hand, there is no guarantee that they
are all satisfied at the \textit{same} world in $\M$ (and in general
they may not be), so their joint satisfiability is not so obvious. 

\begin{lemma} \label{lem:sat}
For all $(x,t) \in X \times T$, the collection $x \cup \bigcup_{i}
t_{i}$ is satisfiable. 
\end{lemma}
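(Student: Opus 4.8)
The plan is to prove satisfiability \emph{directly}, by exhibiting a single probability model together with a world at which every formula in $x \cup \bigcup_i t_i$ holds. Since $x \in X$ and each $t_i \in T_i$, I can fix worlds $\omega_0, \omega_1, \ldots, \omega_n$ of $\M$ with $d_0(\omega_0) = x$ and $d_i(\omega_i) = t_i$; these need not coincide, which is exactly why joint satisfiability is not immediate. The key observation is that the $0$-description constrains only the primitive propositions, while the $i$-description constrains only agent $i$'s beliefs, and---crucially---nothing forces an agent's beliefs to agree with the actual facts. So I would build a fresh world whose basic facts are dictated by $x$ and whose beliefs, for each agent $i$, are simply copied from $\omega_i$.

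Concretely, adjoin a new world $\omega^*$ to $\M$: let $\Omega' \defeq \Omega \cup \{\omega^*\}$, equipped with $\Sigma_{\Omega'} \defeq \{E \subseteq \Omega' \: : \: E \cap \Omega \in \Sigma_{\Omega}\}$, so that $\{\omega^*\}$ is measurable and $\Sigma_{\Omega}$ is recovered by restriction. Extend the interpretation by putting $\omega^* \in \pi'(p)$ iff $p \in x$; extend each old measure $\PR_i(\omega)$ to $\Omega'$ by assigning $\{\omega^*\}$ probability $0$; and set $\PR'_i(\omega^*) \defeq \PR_i(\omega_i)$, likewise extended to give $\{\omega^*\}$ probability $0$. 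Call the resulting structure $\M'$.

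The first technical step is a truth lemma: $\val{\phi}_{\M'} \cap \Omega = \valM{\phi}$ for every $\phi \in \LBT$. This follows by induction on $\phi$, the only nontrivial case being $B_i^{\theta}\phi$, where it uses that every old measure gives $\{\omega^*\}$ probability $0$, so the belief computation at an old world sees only $\val{\phi}_{\M'} \cap \Omega = \valM{\phi}$; the Boolean cases are routine. Given the truth lemma, evaluating at $\omega^*$ shows $\omega^* \in \val{B_i^{\theta}\phi}_{\M'}$ iff $\PR_i(\omega_i)(\valM{\phi}) \geq \theta$ iff $\omega_i \in \valM{B_i^{\theta}\phi}$; since the Boolean connectives are truth-functional, $\omega^*$ and $\omega_i$ satisfy exactly the same Boolean combinations of $B_i^{\theta}$-formulas, i.e.\ $d_i(\omega^*) = d_i(\omega_i) = t_i$. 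Likewise $\omega^*$ realizes precisely the primitive propositions in $x$, so $d_0(\omega^*) = x$. Hence every formula in $x \cup \bigcup_i t_i$ is true at $\omega^*$.

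It remains to check that $\M'$ is a genuine probability model, and this is where I expect the real work to lie. Measurability of each $\PR'_i$ reduces to measurability of $\PR_i$ together with the fact that adjoining or omitting the single measurable point $\omega^*$ preserves measurability of the relevant threshold sets. The delicate point is introspection at $\omega^*$, namely $\PR'_i(\omega^*)^*((\PR'_i)^{-1}(\PR'_i(\omega^*))) = 1$. Here I would use that, because the measures give $\{\omega^*\}$ probability $0$, outer measure transfers cleanly as $\PR'_i(\omega^*)^*(A) = \PR_i(\omega_i)^*(A \cap \Omega)$, and that the set of worlds agreeing with $\omega^*$ on agent $i$'s beliefs meets $\Omega$ in exactly $\PR_i^{-1}(\PR_i(\omega_i))$; introspection at $\omega_i$ in $\M$ then yields outer measure $1$. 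The same transfer argument, applied at old worlds, shows that adjoining a measure-zero world leaves their introspection condition intact. The main obstacle is thus the bookkeeping around outer measure and the introspection condition for the augmented model, rather than the satisfaction argument itself.
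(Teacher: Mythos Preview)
Your argument is correct and follows the same underlying idea as the paper: build a fresh world $\omega^{*}$ whose primitive propositions are dictated by $\omega_{0}$ and whose agent-$i$ beliefs are copied from $\omega_{i}$, then verify via a truth-preservation lemma that $\omega^{*}$ satisfies $x \cup \bigcup_{i} t_{i}$.

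The one noteworthy difference is in the ambient model. The paper takes $n$ disjoint copies of $\M$ and lets $\PR^{*}_{i}(\omega^{*})$ live on the $i$th copy, so that the different agents' beliefs at $\omega^{*}$ are supported on disjoint pieces of the new state space. You instead adjoin $\omega^{*}$ to a \emph{single} copy of $\M$, give $\{\omega^{*}\}$ measure $0$ under every probability, and let all agents' beliefs at $\omega^{*}$ point back into that one copy. Your version is strictly more economical: the multiple copies in the paper are not needed, since introspection for agent $i$ only concerns the fiber $(\PR'_{i})^{-1}(\PR'_{i}(\omega^{*}))$, which, as you observe, meets $\Omega$ in exactly $\PR_{i}^{-1}(\PR_{i}(\omega_{i}))$ and therefore inherits outer measure $1$ from introspection in $\M$. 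The paper's $n$-copy construction makes the separation of the agents visually explicit but buys nothing extra for the proof; your outer-measure transfer $\PR'_{i}(\omega)^{*}(A) = \PR_{i}(\omega)^{*}(A \cap \Omega)$ is the clean bookkeeping device that makes the single-copy route go through.
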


\begin{proof}
  As observed, there are worlds $\omega_0, \ldots, \omega_n$ in $\M$ such
  that
$\omega_{0}$ satisfies $x$
and
$\omega_{i}$ satisfies $t_{i}$
for $i =1,\ldots, n$.  We now construct a model $\M^*$ and world $\omega^*$ in
$\M^*$ such that $\M^*$ consists of $n$ disjoint copies of $\M$
together with the world $\omega^*$; formally,
$\M^* = ((\Omega^*, (\PR_{i}^*)_{i \in I}), \pi^*)$, where 
\begin{itemize}
\item $\Omega^* =
\{(\omega,i): \omega \in \Omega, i \in \{1,\ldots, n\}\} \cup
\{\omega^*\}$;
\item $\pi^*(p) = \left\{\begin{array}{ll}
\cup_{i=1}^n (\pi(p) \times \{i\}) &\mbox{if $\omega_0 \notin \pi(p)$}\\
\cup_{i=1}^n (\pi(p) \times \{i\}) \cup \{\omega^*\} &\mbox{if $\omega_0 \in \pi(p)$}\\
   \end{array}\right.$
\item $\PR_{i}^*(\omega,i)(U \times \{i\}) =
  \PR_i(\omega)(U)$   for $\omega \in \Omega$, and
  $\PR_i(\omega^*)(U \times \{i\}) = 
  \PR_i(\omega_i)(U)$ 
 (so the support of 
   $\PR_{i}^{*}(\omega,i)$ and of $\PR_i(\omega*)$ is contained $\Omega
  \times \{i\}$). 
\end{itemize}
It is easy to check that $\omega^*$ agrees with $\omega_0$ on
propositional formulas and with $\omega_i$ on 
$i$-descriptions.  Thus, the desired result holds.
\end{proof}

In fact, not only is $x \cup \bigcup_{i} t_{i}$ satisfiable, but it determines a unique $\LBT$-description.

\begin{lemma} \label{lem:det}
There is a unique $\LBT$-description $D$ such that $D \supseteq x \cup \bigcup_{i} t_{i}$.
\end{lemma}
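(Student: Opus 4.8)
The plan is to establish existence and uniqueness separately. For existence, I would simply invoke the construction already carried out in the proof of Lemma \ref{lem:sat}: it produces a model $\M^*$ and a world $\omega^*$ such that $\omega^*$ agrees with $\omega_0$ on propositional formulas and with each $\omega_i$ on $i$-descriptions. Since $x = d_0(\omega_0)$ consists of propositional formulas and $t_i = d_i(\omega_i)$ consists of $i$-description formulas, it follows that $\omega^*$ satisfies every formula in $x \cup \bigcup_i t_i$. Taking $D \defeq D(\omega^*)$, the $\LBT$-description of $\omega^*$ in $\M^*$, we get an $\LBT$-description (as already noted, $D(\omega)$ is always one) that contains $x \cup \bigcup_i t_i$. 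This settles existence.

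For uniqueness, the key observation is that $x \cup \bigcup_i t_i$ in fact \emph{decides} the truth value of every formula of $\LBT$, even though individually neither the $0$-description nor any single $i$-description does. To see this, I would note that every formula of $\LBT$ is a Boolean combination of \emph{atomic} formulas, where an atomic formula is either a primitive proposition $p$ or a belief formula $B_{i}^{\theta}\psi$ for some arbitrary $\psi$. The crucial point --- and the step I expect to require the most care --- is that a belief formula $B_{i}^{\theta}\psi$ is decided \emph{outright} by $t_i$, regardless of the internal structure of $\psi$: since $t_i$ is a full $i$-description, it contains either $B_{i}^{\theta}\psi$ or $\lnot B_{i}^{\theta}\psi$, because both are (trivial) Boolean combinations of formulas of the form $B_{i}^{\theta}\phi$. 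This is precisely what allows the argument to cope with arbitrarily nested beliefs such as $B_{1}^{\theta} B_{2}^{\theta'} p$: the outermost operator is resolved by the relevant $t_i$ without ever needing to descend into $\psi$, so the induction never recurses through belief operators.

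With this in hand, I would argue by induction on the Boolean structure of $\phi$ --- taking primitive propositions and belief formulas as the base cases, and $\lnot$ and $\land$ as the inductive cases --- that any two $\LBT$-descriptions $D, D' \supseteq x \cup \bigcup_i t_i$ agree on $\phi$. In the base cases, $x$ (respectively $t_i$) decides the relevant atom and is contained in both $D$ and $D'$, so they agree. For the inductive cases, I would use the fact that a description, being both satisfiable and maximal, cannot contain both $\phi$ and $\lnot\phi$, and hence coincides with the set of formulas true at any of its witnessing worlds; it therefore respects negation and conjunction in the expected way. Since $D$ and $D'$ then agree on every formula and both are maximal, we conclude $D = D'$, establishing uniqueness.
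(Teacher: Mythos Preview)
Your proposal is correct and follows essentially the same approach as the paper: existence via Lemma~\ref{lem:sat} (taking $D = D(\omega)$ for a witnessing world), and uniqueness by structural induction showing that $x \cup \bigcup_i t_i$ decides every formula. Your version is more explicit than the paper's sketch in one helpful respect: you make clear that belief formulas $B_i^\theta\psi$ serve as \emph{base cases} of the induction (decided outright by $t_i$, since both $B_i^\theta\psi$ and $\lnot B_i^\theta\psi$ are Boolean combinations of $B_i^\theta$-formulas), so the argument never needs to recurse through the belief operator---precisely the point the paper leaves implicit in its one-line ``structural induction on $\phi$.''
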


\begin{proof}
By Lemma \ref{lem:sat}, such a $D$ exists (take $D = D(\omega)$ for some $\omega$ that satisfies $x \cup \bigcup_{i} t_{i}$). Uniqueness follows from the following observation, easily proved by structural induction on $\phi$: for all $\phi \in \LBT$, either $x \cup \bigcup_{i} t_{i}$ entails $\phi$ or $x \cup \bigcup_{i} t_{i}$ entails $\lnot \phi$.
\end{proof}

Let $D(x,t)$ denote the unique description determined by $x \cup \bigcup_{i} t_{i}$ as in Lemma \ref{lem:det}. It is easy to see that $D(d_{0}(\omega), d(\omega)) = D(\omega)$. On the other hand, as mentioned above, the collection of descriptions of the form $D(x,t)$ may be strictly larger than those of the form $D(\omega)$, since some tuples $(x,t)$ may combine partial descriptions that are not simultaneously satisfied at any world in $\M$.

The description $D(x,t)$ provides a natural way to associate formulas with events in $X \times T$. For each $\phi \in \LBT$, define
$$\textstyle [\phi] \defeq \{(x,t) \in X \times T \: : \: \phi \in D(x,t)\}.$$

\begin{lemma} \label{lem:gen}
$\Sigma_{X \times T}$ is generated by the collection $\{[\phi] \: : \: \phi \in \LBT\}$.
\end{lemma}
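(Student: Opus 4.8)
The statement asserts that $\sigma(\{[\phi] : \phi \in \LBT\}) = \Sigma_{X \times T}$, so the plan is to prove two inclusions. The engine for both directions is a pair of \emph{locality} observations about membership in the descriptions $D(x,t)$: for a purely propositional formula $\phi$ one has $\phi \in D(x,t) \dimp \phi \in x$, and for an $i$-formula $\phi$ (a Boolean combination of formulas $B_{i}^{\theta} \psi$) one has $\phi \in D(x,t) \dimp \phi \in t_{i}$. Each follows because the relevant fragment is complete and consistent---$x = d_{0}(\omega)$ and $t_{i} = d_{i}(\omega)$ are the propositional and $i$-fragments of a genuine description $D(\omega)$, so each decides every formula in its fragment without contradiction---while $D(x,t)$ is a maximal consistent set extending $x \cup \bigcup_{j} t_{j}$ (Lemma~\ref{lem:det}). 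Hence if $\phi$ lies in the fragment then $\phi \in D(x,t)$, and otherwise $\lnot\phi$ lies in the fragment, forcing $\phi \notin D(x,t)$ by consistency of $D(x,t)$.

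For the inclusion $\sigma(\{[\phi]\}) \subseteq \Sigma_{X\times T}$, I would show each $[\phi] \in \Sigma_{X\times T}$ by structural induction on $\phi$. Since $D(x,t)$ is a description, $[\lnot\phi] = (X\times T)\mysetminus [\phi]$ and $[\phi\land\psi] = [\phi]\cap[\psi]$, so the Boolean cases reduce immediately to the induction hypothesis. For the two atomic-style cases I would invoke the locality facts to write $[p] = E_{0}(p)\times T$ and $[B_{i}^{\theta}\psi] = X\times\cdots\times E_{i}(B_{i}^{\theta}\psi)\times\cdots\times T_{n}$; both are measurable rectangles assembled from generators of the factor $\sigma$-algebras, hence members of $\Sigma_{X\times T}$.

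For the reverse inclusion I would use the standard fact that, since $\Sigma_{X} = \sigma(\{E_{0}(\phi)\})$ and $\Sigma_{T_{i}} = \sigma(\{E_{i}(\phi)\})$, the product $\sigma$-algebra $\Sigma_{X\times T}$ is generated by the cylinders $E_{0}(\phi)\times T$ and $X\times\cdots\times E_{i}(\phi)\times\cdots\times T_{n}$ (which rests on the fact that preimage under a coordinate projection commutes with $\sigma$-algebra generation). It then suffices to realize each such cylinder as some $[\phi]$. By the locality facts, $E_{0}(\phi)\times T = [\phi]$ whenever $\phi$ is propositional (and $E_{0}(\phi) = \emptyset$ otherwise, so the cylinder is $\emptyset$ and trivially in $\sigma(\{[\phi]\})$), and likewise $X\times\cdots\times E_{i}(\phi)\times\cdots\times T_{n} = [\phi]$ whenever $\phi$ is an $i$-formula. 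Thus every generator of $\Sigma_{X\times T}$ lies in $\sigma(\{[\phi]\})$, yielding the inclusion. I expect the main obstacle to be a clean verification of the two locality equivalences---pinning down exactly why a propositional (respectively, $i$-) formula's membership in $D(x,t)$ is insensitive to the remaining coordinates---since both the measurability induction and the identification of cylinders with sets $[\phi]$ depend entirely on them.
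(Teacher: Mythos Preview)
Your proposal is correct and follows essentially the same line as the paper's proof: both hinge on the identifications $[p] = E_{0}(p) \times T$ and $[B_{i}^{\theta}\psi] = X \times E_{i}(B_{i}^{\theta}\psi) \times T_{-i}$ (your ``locality'' facts), from which the forward inclusion is immediate. The only cosmetic difference is in the reverse direction: the paper shows directly that a full rectangle $E_{0}(\phi_{0}) \times \cdots \times E_{n}(\phi_{n})$ equals $[\phi_{0} \land \cdots \land \phi_{n}]$ (under a nonemptiness proviso), whereas you work with the equivalent generating set of single-coordinate cylinders and match each to a single $[\phi]$---same idea, slightly different bookkeeping.
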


\begin{proof}
It is easy to see that every $\phi \in \LBT$ is a Boolean combination of primitive propositions and formulas of the form $B_{i}^{\theta}\psi$; it follows that $\{[\phi] \: : \: \phi \in \LBT\}$ is the algebra generated by all sets of the form $[p]$ and $[B_{i}^{\theta}\psi]$. Now observe that $(x,t) \in [p]$ iff $p \in x$, so $[p] = E_{0}(p) \times T$, and similarly, $(x,t) \in [B_{i}^{\theta}\psi]$ iff $B_{i}^{\theta}\psi \in t_{i}$, so $[B_{i}^{\theta}\psi] = X \times E_{i}(B_{i}^{\theta}\psi) \times T_{-i}$. Thus, $\{[\phi] \: : \: \phi \in \LBT\} \subseteq \Sigma_{X \times T}$.

To see that $\Sigma_{X \times T}$ is in fact generated by this collection, it suffices to observe that if each of $E_{0}(\phi_{0})$, $E_{1}(\phi_{1})$, \ldots, $E_{n}(\phi_{n})$ is nonempty, then
$$E_{0}(\phi_{0}) \times E_{1}(\phi_{1}) \times \cdots \times E_{n}(\phi_{n}) = [\phi_{0} \land \phi_{1} \land \cdots \land \phi_{n}].$$
\end{proof}

We turn now to defining the probability measures
$\beta_{i}(t_{i})$. Each $t_{i} \in T_{i}$ is a collection of formulas
in $\LBT$ that bear on agent $i$'s beliefs. We can use these
formulas to constrain the space of possible outputs of
$\beta_{i}(t_{i})$. Moreover, provided $\LBT$ is rich enough, these
contraints yield a unique probability measure. 

Let $\P_{t_{i}}$ denote the set of all probability measures $\mu$ on $X \times T$ such that, for each $\phi \in \LBT$ and all $\theta \in \Theta$,
\begin{equation} \label{eqn:con}
\mu([\phi]) \geq \theta \; \dimp \; B_{i}^{\theta} \phi \in t_{i}.
\end{equation}

\begin{lemma} \label{lem:pti}
$\P_{t_{i}} \neq \emptyset$. Moreover, if $\Theta$ is dense in $[0,1]$, then $|\P_{t_{i}}| = 1$.
\end{lemma}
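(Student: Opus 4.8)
The plan is to treat the two assertions separately. For nonemptiness I will exhibit a concrete witness, obtained by pushing a belief measure of $\M$ forward along the description map; for uniqueness I will combine the density of $\Theta$ with the standard fact that a finite measure is pinned down by its values on a generating algebra.

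For nonemptiness, recall that $T_{i} = \{d_{i}(\omega) : \omega \in \Omega\}$, so I may fix a world $\omega$ in $\M$ with $d_{i}(\omega) = t_{i}$. Define the description map $g : \Omega \to X \times T$ by $g(\omega') \defeq (d_{0}(\omega'), (d_{j}(\omega'))_{j \in I})$. The first step is to check that $g$ is measurable: for each $\phi \in \LBT$, the identity $D(d_{0}(\omega'), d(\omega')) = D(\omega')$ recorded after Lemma \ref{lem:det} gives $g^{-1}([\phi]) = \{\omega' : \phi \in D(\omega')\} = \valM{\phi}$, which is an event, and since $\Sigma_{X \times T}$ is generated by the sets $[\phi]$ (Lemma \ref{lem:gen}), measurability of $g$ follows. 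I then take $\mu \defeq \PR_{i}(\omega) \circ g^{-1}$, a probability measure on $X \times T$, and verify $\mu \in \P_{t_{i}}$ by computing $\mu([\phi]) = \PR_{i}(\omega)(\valM{\phi})$ and noting that, by the semantics of $B_{i}^{\theta}$ together with the definition of the $i$-description, $B_{i}^{\theta} \phi \in t_{i} = d_{i}(\omega)$ holds exactly when $\PR_{i}(\omega)(\valM{\phi}) \geq \theta$. Hence condition~(\ref{eqn:con}) is satisfied for every $\phi$ and every $\theta \in \Theta$.

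For uniqueness, assume $\Theta$ is dense in $[0,1]$ and take any $\mu, \mu' \in \P_{t_{i}}$. Condition~(\ref{eqn:con}) gives, for every $\phi$ and every $\theta \in \Theta$, the chain $\mu([\phi]) \geq \theta \dimp B_{i}^{\theta} \phi \in t_{i} \dimp \mu'([\phi]) \geq \theta$. If some $\phi$ had $\mu([\phi]) \neq \mu'([\phi])$, say $\mu([\phi]) < \mu'([\phi])$, density would supply a threshold $\theta \in \Theta$ strictly between the two values, contradicting this equivalence; so $\mu$ and $\mu'$ agree on every $[\phi]$. Because the family $\{[\phi] : \phi \in \LBT\}$ is closed under the Boolean operations---$[\phi \land \psi] = [\phi] \cap [\psi]$ and $[\lnot \phi] = (X \times T) \mysetminus [\phi]$---it is an algebra, and by Lemma \ref{lem:gen} it generates $\Sigma_{X \times T}$; the uniqueness-of-extension theorem for (finite) measures therefore yields $\mu = \mu'$.

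I expect the uniqueness direction to be the crux. Nonemptiness is largely bookkeeping once $g$ is seen to be measurable, whereas uniqueness needs two genuinely different ingredients working together: the order-theoretic density of $\Theta$, which is what lets the thresholds named inside $t_{i}$ determine each value $\mu([\phi])$ exactly rather than only approximately, and the measure-theoretic passage from agreement on the generating algebra $\{[\phi]\}$ to equality of the measures. The step deserving the most care is confirming that $\{[\phi]\}$ really is an algebra (so that the extension theorem applies), not merely a family that generates $\Sigma_{X \times T}$.
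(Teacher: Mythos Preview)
Your proof is correct and follows the same overall strategy as the paper's: for nonemptiness, pick a world $\omega$ realizing $t_{i}$ and transport $\PR_{i}(\omega)$ to $X \times T$; for uniqueness, use density of $\Theta$ to pin down the value on each $[\phi]$ and then appeal to uniqueness of extension from the generating algebra. The one technical difference worth noting is in how the transport is carried out. The paper defines $\mu_{i,\omega}([\phi]) \defeq \PR_{i}(\omega)(\valM{\phi})$ as a pre-measure on the algebra $\{[\phi] : \phi \in \LBT\}$ and then extends via Carath\'eodory's theorem, whereas you take the pushforward $\PR_{i}(\omega) \circ g^{-1}$ along the description map directly. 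Your route is a touch cleaner, since the pushforward of a probability measure is automatically a probability measure, so you sidestep the checks (well-definedness of $\mu_{i,\omega}$ on $[\phi]$ and countable additivity on the algebra) that the paper leaves to the reader; the paper's route has the minor advantage that the same Carath\'eodory invocation simultaneously furnishes both existence and the uniqueness-of-extension fact needed in the second half.
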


\begin{proof}
First we show that $\P_{t_{i}}$ is nonempty. Let $\omega$ be a world in $\M$ such that $d_{i}(\omega) = t_{i}$. For each $\phi \in \LBT$, define
$$\mu_{i,\omega}([\phi]) = \PR_{i}(\omega)(\valM{\phi}).$$
One can check that $\mu_{i,\omega}$ is a pre-measure on the algebra $\{[\phi] \: : \: \phi \in \LBT\}$ and satisfies (\ref{eqn:con}).
By Carath\'eodory's extension theorem \cite[Theorem 1.14]{Folland},
there is a unique extension $\tilde{\mu}_{i,\omega}$ of $\mu_{i,\omega}$ to the $\sigma$-algebra generated by $\{[\phi] \: : \: \phi \in \LBT\}$, which by Lemma \ref{lem:gen} is just $\Sigma_{X \times T}$. Therefore, by construction, $\tilde{\mu}_{i,\omega} \in \P_{t_{i}}$.

If $\Theta$ is dense in $[0,1]$, then it is easy to see that for all
$\phi \in \LBT$, if $\mu \in \P_{t_{i}}$ then 
$$\mu([\phi]) = \sup\{\theta \in \Theta \: : \: B_{i}^{\theta} \phi \in t_{i}\}.$$
It follows that $\P_{t_{i}} = \{\tilde{\mu}_{i,\omega}\}$.
\end{proof}


Let us restrict our attention for the time being to the case where $\Theta$ is a countable, dense subset of $[0,1]$; indeed, it is common to assume that $\Theta = [0,1] \cap \mathbb{Q}$. Countability ensures that $\LBT$ contains only countably-many modalities, and by Lemma \ref{lem:pti}, density allows us to define $\beta_{i}(t_{i})$ to be the unique element of $\P_{t_{i}}$. We then have the following:

\begin{proposition}
Let $\M$ be a probability model, and let $\TM \defeq (X, (T_{i})_{i
  \in I}, (\beta_{i})_{i \in I})$ as defined above. Then $\TM$ is a
type space. Define $\nu(p) \defeq E_{0}(p)$. Then $\IM \defeq
(\TM,\nu)$ is an interpreted type space, and for all $\phi \in
\LBT$, we have 
$$\omega \in \valM{\phi} \; \rimp \; (d_{0}(\omega), d(\omega)) \in \val{\phi}_{\IM}.$$
\end{proposition}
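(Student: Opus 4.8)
The plan is to prove the three assertions in turn---that $\TM$ is a type space, that $\IM$ is an interpreted type space, and the displayed truth-transfer property---while isolating a single lemma that does most of the work: $\val{\phi}_{\IM} = [\phi]$ for every $\phi \in \LBT$. Granting this, the final implication is immediate, since $\omega \in \valM{\phi}$ means $\phi \in D(\omega)$, and $D(\omega) = D(d_{0}(\omega), d(\omega))$, so $\phi \in D(d_{0}(\omega), d(\omega))$, i.e. $(d_{0}(\omega), d(\omega)) \in [\phi] = \val{\phi}_{\IM}$; in fact this yields a biconditional. The real work therefore lies in verifying the type-space conditions on $\TM$ and in carrying out the induction that establishes the lemma.

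First I would check that each $\beta_{i} : T_{i} \to \Delta(X \times T)$ is measurable. Since $\Sigma_{\Delta(X \times T)}$ is generated by the sets $\B^{\theta}(E)$, it suffices to show that $t_{i} \mapsto \beta_{i}(t_{i})(E)$ is $\Sigma_{T_{i}}$-measurable for every $E \in \Sigma_{X \times T}$. I would run a Dynkin ($\pi$--$\lambda$) argument: the collection of $E$ for which this holds is a $\lambda$-system (using $\beta_{i}(t_{i})(E^{c}) = 1 - \beta_{i}(t_{i})(E)$ and continuity of measure along increasing unions), and by Lemma~\ref{lem:gen} the sets $[\phi]$ form a generating algebra, so it is enough to treat $E = [\phi]$. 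Here the defining constraint (\ref{eqn:con}) yields, for $\theta \in \Theta$, the identity $\{t_{i} : \beta_{i}(t_{i})([\phi]) \geq \theta\} = E_{i}(B_{i}^{\theta}\phi)$, a generator of $\Sigma_{T_{i}}$; and for arbitrary $\theta \in [0,1]$, density of $\Theta$ lets me rewrite this set as $\bigcap_{\theta' \in \Theta,\, \theta' < \theta} E_{i}(B_{i}^{\theta'}\phi)$, a countable---hence measurable---intersection.

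Next I would verify introspection, i.e. that the marginal of $\beta_{i}(t_{i})$ on $T_{i}$ is $\delta_{t_{i}}$; by the characterization recalled in Section~\ref{sec:pre} it suffices to show this marginal assigns $\{t_{i}\}$ outer measure $1$. The cleanest route is to observe that, for any world $\omega$ with $d_{i}(\omega) = t_{i}$, the measure $\beta_{i}(t_{i})$ is the pushforward of $\PR_{i}(\omega)$ along the (measurable) description map $\omega' \mapsto (d_{0}(\omega'), d(\omega'))$: both measures send $[\phi]$ to $\PR_{i}(\omega)(\valM{\phi})$ (since $\valM{\phi}$ is precisely the preimage of $[\phi]$) and agree on the generating algebra, hence coincide. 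The marginal on $T_{i}$ is then the pushforward of $\PR_{i}(\omega)$ along $\omega' \mapsto d_{i}(\omega')$; since every $\omega'$ with $\PR_{i}(\omega') = \PR_{i}(\omega)$ satisfies $d_{i}(\omega') = t_{i}$, the frame condition $\PR_{i}(\omega)^{*}(\PR_{i}^{-1}(\PR_{i}(\omega))) = 1$ transfers, by monotonicity of outer measure under pushforward, to give $\{t_{i}\}$ outer measure $1$. (Alternatively, one can argue directly from the validities $B_{i}^{\theta}\phi \lthen B_{i}^{1} B_{i}^{\theta}\phi$ and $\lnot B_{i}^{\theta}\phi \lthen B_{i}^{1} \lnot B_{i}^{\theta}\phi$, reducing to generators $E_{i}(B_{i}^{\theta}\phi)$ and extending via $\pi$--$\lambda$, at the cost of assuming $1 \in \Theta$.)

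With $\TM$ established as a type space, $\IM$ is an interpreted type space immediately, since $\nu(p) = E_{0}(p)$ is by construction a generator of $\Sigma_{X}$. It remains to prove the lemma $\val{\phi}_{\IM} = [\phi]$ by structural induction on $\phi$. The base case $\phi = p$ and the Boolean steps follow from Lemma~\ref{lem:gen} together with the fact that each $D(x,t)$ is a maximal, deductively closed description (so $[\lnot\phi]$ and $[\phi \land \psi]$ behave as required); the modal step rewrites $\val{B_{i}^{\theta}\phi}_{\IM} = \{(x,t) : \beta_{i}(t_{i})(\val{\phi}_{\IM}) \geq \theta\}$ via the induction hypothesis to $\{(x,t) : \beta_{i}(t_{i})([\phi]) \geq \theta\}$, which by (\ref{eqn:con}) equals $\{(x,t) : B_{i}^{\theta}\phi \in t_{i}\} = [B_{i}^{\theta}\phi]$, using the identification $[B_{i}^{\theta}\phi] = X \times E_{i}(B_{i}^{\theta}\phi) \times T_{-i}$ from Lemma~\ref{lem:gen}. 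I expect the main obstacle to be the measurability of $\beta_{i}$: coordinating the $\pi$--$\lambda$ reduction to the algebra $\{[\phi]\}$ with the density-and-countability manipulation of thresholds is the one genuinely delicate step, whereas introspection and the truth induction are largely bookkeeping once the pushforward identification and (\ref{eqn:con}) are in hand.
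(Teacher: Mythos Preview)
Your proof is correct and follows the same overall skeleton as the paper---verify measurability of $\beta_i$, verify introspection, then a structural induction for the semantic equivalence (your intermediate lemma $\val{\phi}_{\IM} = [\phi]$ is exactly what that induction amounts to). The differences are in the execution of the first two steps. For measurability, the paper shortcuts your $\pi$--$\lambda$ argument by invoking \cite[Lemma~4.5]{HeSa98} to conclude directly that $\Sigma_{\Delta(X \times T)}$ is generated by the sets $\B^{\theta}([\phi])$, after which $\beta_i^{-1}(\B^{\theta}([\phi])) = E_i(B_i^{\theta}\phi)$ finishes it; your route is more self-contained but does the same work. For introspection, the paper takes your \emph{alternative}---it argues from the validities $B_i^{\theta}\psi \leftrightarrow B_i^{1}B_i^{\theta}\psi$ that $\beta_i(t_i)([B_i^{\theta}\psi]) = 1$ whenever $B_i^{\theta}\psi \in t_i$ (and dually), tacitly using $1 \in \Theta$. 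Your primary pushforward argument is a cleaner and slightly more general route: it imports the introspection condition on $\PR_i$ directly, sidesteps the need for $1 \in \Theta$, and makes explicit the useful fact that $\beta_i(t_i)$ is the pushforward of $\PR_i(\omega)$ along the description map.
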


\begin{proof}
First we observe that $\Sigma_{\Delta(X \times T)}$ is generated by
all events of the form $\B^{\theta}([\phi])$; this follows from Lemma
\ref{lem:gen} together with \cite[Lemma 4.5]{HeSa98}.
Thus, to prove that $\beta_{i}$ is measurable it suffices
to prove that each set $\beta_{i}^{-1}(\B^{\theta}([\phi]))$ is
measurable. By definition, we know that $\beta_{i}(t_{i})([\phi]) \geq
\theta$ iff $B_{i}^{\theta} \phi \in t_{i}$; it follows that 
$$\beta_{i}^{-1}(\B^{\theta}([\phi])) = E_{i}(B_{i}^{\theta} \phi),$$
which is measurable by definition.
That $\beta_{i}(t_{i})$ concentrates on $t_{i}$ follows from the fact that
$$(x,t) \in [B_{i}^{\theta}\psi] \dimp B_{i}^{\theta}\psi \in t_{i} \dimp B_{i}^{1}B_{i}^{\theta}\psi \in t_{i} \dimp \beta_{i}(t_{i})([B_{i}^{\theta}\psi]) = 1.$$
Finally, the semantic equivalence follows by structural induction on $\phi$.
\end{proof}

\section{Universal Type Spaces and Canonical Models} \label{sec:uni}

\subsection{Universal type spaces}

The existence of a \emph{universal type space} \cite{MZ} underpins the use of
type spaces as a general framework for modeling beliefs: roughly
speaking, it guarantees that they do not rule out any possible
collection of beliefs. Individual type spaces, of course, can be quite
small and omit many configurations of beliefs. The universal type
space, by contrast, essentially includes all possible configurations
of belief; in particular, this means we need not be concerned with
gaps in our representation of games.

Formally, given type spaces $\T = (X, (T_{i})_{i \in I},
(\beta_{i})_{i \in I})$ and $\T' = (X, (T_{i}')_{i \in I},
(\beta'_{i})_{i \in I})$ (with a common set $X$ of states), a profile
of functions $f_{i}: T_{i} \to T_{i}'$ constitutes a \defin{type morphism} from $\T$ to $\T'$ provided that, for each $i \in I$, $t_{i}
\in T_{i}$, and each event $E \subseteq X \times T'$, 
$$\beta_{i}'(f_{i}(t_{i}))(E) = \beta_{i}(t_{i})(f^{-1}(E)),$$
where $f: X \times T \to X \times T'$ is defined by $f = (id_{X},
f_{1}, \ldots, f_{n})$. Roughly speaking, this says that each $f_{i}$
assigns to each $t_{i} \in T_{i}$ a type $f_{i}(t_{i}) \in T_{i}'$
that agrees with $t_{i}$ on the probabilities of all events, where
events in $\T$ and $\T'$ are identified via the correspondence given
by $f$. A type space $\T^{*}$ is called \defin{universal (for $X$)}
if, for every type space $\T = (X, (T_{i})_{i \in I}, (\beta_{i})_{i
  \in I})$, there exists a unique type morphism from $\T$ to
$\T^{*}$. Thus, each such $\T$ can be thought of as existing
``inside'' $\T^{*}$ (via the mapping $f$). 

Type morphisms are defined so as to preserve the structure of belief. Indeed, given any interpretation $\nu: \Phi \to \Sigma_{X}$, it is easy to see that if $(f_{1}, \ldots, f_{n})$ is a type morphism from $\T$ to $\T'$, then for any $(x,t) \in X \times T$ and any $\phi \in \LBI(\Phi)$, we have
$$(x,t) \in \val{\phi}_{(\T,\nu)} \dimp f(x,t) \in \val{\phi}_{(\T',\nu)}.$$
As a consequence, the universal type space for $X$ satisfies all the $\LBI(\Phi)$-descriptions that are satisfied in \textit{some} type space over $X$. It is natural to wonder whether this property characterizes the universal type space; the connection with canonical models we now present essentially amounts to a formalization of this idea.

\subsection{Canonical models} \label{sec:can}

The classical canonical model construction is used to prove completeness of various modal systems. Given some axiom system $\mathsf{AX}$ of interest, a model is constructed wherein each world corresponds to a maximal $\mathsf{AX}$-consistent set of formulas, with additional structure derived from the properties of these sets of formulas.

The construction we present here differs in that we are not concerned
with axiomatics---indeed, for logics that fail to be compact (such as,
notably, the logic of $\LBI$ as interpreted in probability frames),
consistent sets of formulas need not be satisfiable, so the canonical
model construction fails. Nonetheless, we can adapt this construction
by replacing ``consistent'' with ``satisfiable''; in other words, we
can build a model in which the worlds are exactly the
$\LBI$-descriptions.%
\footnote{As we said above, a similar construction appears in \cite{Aumann99a}, though
  the connection to type spaces is not explored in any depth.} 
Intuitively, such a model contains a world satisfying every such
description; ultimately, we will show that we can obtain a universal
type space by constructing such a model and then translating it into a
type space as in Section \ref{sec:fac}. 

Consider a fixed language $\LBI(\Phi)$ and a class of probability models $\mathscr{C}$; let $\bar{\Omega}$ denote the set of all $\LBI(\Phi)$-descriptions satisfiable in some model in $\mathscr{C}$. Define $\hat{\phi} = \{\bar{\omega} \in \bar{\Omega} \: : \: \phi \in \bar{\omega}\}$, and let $\Sigma_{\bar{\Omega}}$ be the $\sigma$-algebra generated by the collection $\A = \{\hat{\phi} \: : \: \phi \in \LBI(\Phi)\}$. Define $\mu_{i, \bar{\omega}}: \A \to [0,1]$ by
$$\mu_{i, \bar{\omega}}(\hat{\phi}) = \sup\{\theta \in [0,1] \: : \: B_{i}^{\theta}\phi \in \bar{\omega}\}.$$
It is not hard to check that $\mu_{i,\bar{\omega}}$ is a pre-measure
on the algebra $\A$,
so, by Carath\'eodory's extension theorem, it can be
extended to a unique 
probability measure on $\Sigma_{\bar{\Omega}}$; let
$\bar{\PR}_{i}(\bar{\omega})$ denote this extension. Finally, for each
$p \in \Phi$, set $\bar{\pi}(p) = \hat{p}$. 

\begin{proposition}
  $\bar{\M} = (\bar{\Omega}, (\bar{\PR}_{i})_{i \in I}), \bar{\pi})$ is a
    probability model, and for all $\phi \in \LBI$, we have
  $\val{\phi}_{\bar{\M}} = \hat{\phi}$. Moreover, $\bar{\M}$ is
  universal for $\mathscr{C}$ in the sense that, for all $\M \in
  \mathscr{C}$, there is a truth-preserving map (namely, $D$, the
  description map for $\M$) from $\M$ to $\bar{\M}$. 
\end{proposition}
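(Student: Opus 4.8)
The plan is to prove the three claims in the order (i) the truth lemma $\val{\phi}_{\bar{\M}} = \hat{\phi}$, (ii) that $\bar{\M}$ is a probability model, and (iii) universality, since each rests on the previous. The truth lemma goes by structural induction on $\phi$. The propositional base case is immediate from $\bar{\pi}(p) = \hat{p}$, and the Boolean steps use that each $\bar{\omega} \in \bar{\Omega}$ is maximal and, being the description of an actual world, is closed under propositional reasoning, so $\widehat{\lnot \phi} = \bar{\Omega} \mysetminus \hat{\phi}$ and $\widehat{\phi \land \psi} = \hat{\phi} \cap \hat{\psi}$. The only substantive case is $B_{i}^{\theta}\phi$. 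Here I would first record the key fact that every $\bar{\omega} \in \bar{\Omega}$ is \emph{realized}: since $\bar{\omega}$ is satisfiable in some $\M \in \mathscr{C}$ and maximal, $\bar{\omega} = D(\omega)$ for a world $\omega$ of $\M$. Consequently $\{\theta' \mid B_{i}^{\theta'}\phi \in \bar{\omega}\} = \{\theta' \mid \PR_{i}(\omega)(\valM{\phi}) \geq \theta'\}$, which (because $\Theta = [0,1]$) is the closed interval $[0,p]$ with $p = \PR_{i}(\omega)(\valM{\phi})$. Hence the supremum defining $\mu_{i,\bar{\omega}}(\hat{\phi})$ is attained and equals $p$, so $\bar{\PR}_{i}(\bar{\omega})(\hat{\phi}) \geq \theta \dimp p \geq \theta \dimp B_{i}^{\theta}\phi \in \bar{\omega}$; combined with the induction hypothesis $\valM{\phi} = \hat{\phi}$ this yields $\val{B_{i}^{\theta}\phi}_{\bar{\M}} = \widehat{B_{i}^{\theta}\phi}$.

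Next I would verify that $\bar{\M}$ is a probability model. Measurability of $\bar{\PR}_{i}$ follows as in the earlier translation: via \cite[Lemma 4.5]{HeSa98}, $\Sigma_{\Delta(\bar{\Omega})}$ is generated by the sets $\B^{\theta}(\hat{\phi})$, and $\bar{\PR}_{i}^{-1}(\B^{\theta}(\hat{\phi})) = \widehat{B_{i}^{\theta}\phi}$ by the belief computation above, which lies in $\A \subseteq \Sigma_{\bar{\Omega}}$. The harder requirement is introspection, $\bar{\PR}_{i}(\bar{\omega})^{*}(\bar{\PR}_{i}^{-1}(\bar{\PR}_{i}(\bar{\omega}))) = 1$, and I would obtain it by transferring introspection from a realizing model. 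Fix $\bar{\omega} = D(\omega)$ with $\omega$ a world of $\M \in \mathscr{C}$. First, the description map $D \colon \M \to \bar{\M}$ is measure-preserving: $D^{-1}(\hat{\phi}) = \valM{\phi}$, and $\bar{\PR}_{i}(\bar{\omega})(\hat{\phi}) = p = \PR_{i}(\omega)(\valM{\phi}) = \PR_{i}(\omega)(D^{-1}(\hat{\phi}))$, so the pushforward of $\PR_{i}(\omega)$ under $D$ agrees with $\bar{\PR}_{i}(\bar{\omega})$ on the generating algebra $\A$ and hence everywhere, by uniqueness in Carath\'eodory's theorem. Second, $\bar{\PR}_{i}(\bar{\omega})$ is determined by $\{(\theta,\psi) \mid B_{i}^{\theta}\psi \in \bar{\omega}\}$, which agrees with the corresponding data for any $\omega'$ satisfying $\PR_{i}(\omega') = \PR_{i}(\omega)$; thus $\PR_{i}^{-1}(\PR_{i}(\omega)) \subseteq D^{-1}(\bar{\PR}_{i}^{-1}(\bar{\PR}_{i}(\bar{\omega})))$. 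Now for any measurable $E \supseteq \bar{\PR}_{i}^{-1}(\bar{\PR}_{i}(\bar{\omega}))$, the set $D^{-1}(E)$ is measurable and contains $\PR_{i}^{-1}(\PR_{i}(\omega))$, so introspection in $\M$ gives $\PR_{i}(\omega)(D^{-1}(E)) = 1$, and measure-preservation gives $\bar{\PR}_{i}(\bar{\omega})(E) = 1$. As $E$ was arbitrary, the outer measure of $\bar{\PR}_{i}^{-1}(\bar{\PR}_{i}(\bar{\omega}))$ is $1$.

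Universality is then immediate. For $\M \in \mathscr{C}$, the description map $D$ sends each world $\omega$ to $D(\omega) \in \bar{\Omega}$ (which is satisfiable in $\M$), and by the truth lemma $D(\omega) \in \val{\phi}_{\bar{\M}} = \hat{\phi} \dimp \phi \in D(\omega) \dimp \omega \in \valM{\phi}$, so $D$ is truth-preserving (and measurable, since $D^{-1}(\hat{\phi}) = \valM{\phi}$). I expect the introspection condition to be the main obstacle: the set $\bar{\PR}_{i}^{-1}(\bar{\PR}_{i}(\bar{\omega}))$ is in general an uncountable intersection of probability-one generators and need not be measurable, so countable additivity cannot be applied directly. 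The device of pulling the entire condition back through the measure-preserving description map to a realizing model, where introspection is assumed, is precisely what makes the outer-measure computation go through cleanly.
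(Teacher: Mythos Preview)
The paper omits the proof of this proposition entirely (it is one of the proofs dropped for length), so there is no line-by-line comparison to make. Your argument is correct and proceeds along the lines one would expect given the surrounding material: the truth lemma by induction using the realized form $\bar{\omega} = D(\omega)$, measurability of $\bar{\PR}_i$ via the generators $\B^{\theta}(\hat{\phi})$ exactly as in the earlier Proposition for $\T_{\M}$, and universality directly from the truth lemma. One notational slip: when you invoke ``the induction hypothesis $\valM{\phi} = \hat{\phi}$'' you mean $\val{\phi}_{\bar{\M}} = \hat{\phi}$; the symbol $\valM{\cdot}$ in the paper denotes truth in the realizing model $\M$, not in $\bar{\M}$.

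Your handling of introspection deserves comment, because it is the one place where a naive argument fails and you have correctly identified why. The set $\bar{\PR}_i^{-1}(\bar{\PR}_i(\bar{\omega}))$ is indeed an uncountable intersection $\bigcap_{\theta,\phi} \widehat{B_i^{\theta}\phi \liff B_i^{\theta}\phi}$-style sets and need not lie in $\Sigma_{\bar{\Omega}}$, so one cannot simply cite countable additivity. Pulling the outer-measure condition back through the measure-preserving description map $D$ to a realizing $\M \in \mathscr{C}$, where introspection is assumed, is exactly the right move: the chain $\PR_i^{-1}(\PR_i(\omega)) \subseteq D^{-1}(\bar{\PR}_i^{-1}(\bar{\PR}_i(\bar{\omega}))) \subseteq D^{-1}(E)$ together with $\bar{\PR}_i(\bar{\omega}) = D_{*}\PR_i(\omega)$ gives the outer-measure-$1$ conclusion cleanly. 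This is in the same spirit as the paper's earlier arguments (Lemma~\ref{lem:pti} and the Proposition following it), which also exploit a realizing world to define and compute measures.
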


Call $\bar{\M}$ the \defin{universal probability model for $\mathscr{C}$ over $\LBI(\Phi)$}.
As we mentioned earlier, Meier \citeyear{Meier12} works with an
infinitary version of the language $\LBI(\Phi)$ and constructs a
canonical model  for that language.  Call his language $\LBII$.
Although $\LBII$ is infinitary, as observed in \cite[Lemma
  4.1]{HP08a}, every $\LBI$-description can be uniquely extended to an
$\LBII$-description.  It follows that the canonical model for the
language $\LBI$ is isomorphic to the canonical model for $\LBII$.
Meier shows that the canonical model for $\LBII$ is universal.  Of
course, it follows that the canonical model for $\LBI$ is also
universal.  We given an independent proof of this result here, since
it allows us to connect universal type spaces to the language
considerations discussed earlier.

\subsection{Translation} \label{sec:trs}

Let $X$ be a measurable space of states where $\Sigma_{X}$ is
generated by the singletons $\{x\}$.%
\footnote{It is possible to weaken this condition to the following:
  for every $x,y \in X$, there exists a ``separating event'' $E \in
  \Sigma_{X}$ such that $x \in E$ and $y \notin E$. The issue here is
  that if $X$ contains points that are not separated in this way, they
  will not differ on any description and so the universal model
  construction we employ below will end up identifying them. Notice,
  however, that this is only a problem because the universal type space
for state space $X$ is required to use $X$ as the state space, even
when $X$ contains 
  ``redundant'' states that are not separated by any
  event. Intuitively, however, this is unnecessary---a slightly
  relaxed notion of a universal type space would simply require that
  its state space be rich enough to reflect the measure structure of
  $X$, rather than its set-theoretic structure. And indeed, this is
  essentially what you get by running the construction below without
  the separability requirement articulated above.} 
We construct a universal type
space for $X$ by first constructing a universal model as in Section
\ref{sec:can}. Consider the language $\LBI(X)$ (i.e., where $\Phi =
X$) and the class $\mathscr{C}_{X}$ of probability models such that
$\{\pi(x) \: : \: x \in X\}$ partitions $\Omega$. Intuitively, this
condition hard-codes the constraint that exactly one state $x \in X$
is the ``true'' state of the world. 

\begin{theorem} \label{thm:can}
Let $\bar{\M}$ be the universal probability model for $\mathscr{C}_{X}$ over $\LBI(X)$. Then the type space $\T_{\bar{\M}}$ is universal for $X$.
\end{theorem}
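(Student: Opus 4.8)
The plan is to show that the type space $\T_{\bar{\M}}$ obtained by running the Section~\ref{sec:fac} construction on the universal probability model $\bar{\M}$ satisfies the defining property of a universal type space: for every type space $\T = (X, (T_i)_{i \in I}, (\beta_i)_{i \in I})$ over $X$, there exists a unique type morphism from $\T$ to $\T_{\bar{\M}}$. I would break this into three movements: first establish that $\T_{\bar{\M}}$ really has state space (essentially) $X$; second, construct a type morphism from an arbitrary $\T$; third, prove its uniqueness.

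**First I would** pin down the state space of $\T_{\bar{\M}}$. By the Section~\ref{sec:fac} construction, states are $0$-descriptions $d_0(\bar\omega)$ of worlds in $\bar{\M}$. Since the language is $\LBI(X)$ with $\Phi = X$ and $\mathscr{C}_X$ forces $\{\pi(x)\}$ to partition $\Omega$, each world satisfies exactly one primitive proposition $x \in X$; combined with the assumption that $\Sigma_X$ is generated by singletons, the $0$-description of a world is determined by which $x$ it satisfies, so the induced state space is canonically identifiable with $X$ itself. This identification is what makes ``universal \emph{for} $X$'' meaningful, and I would verify the measurable-space structures match under it.

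**The heart of the argument** is the morphism. Given an arbitrary type space $\T$ over $X$, I would first equip it with the natural interpretation $\nu(x) = \{x\}$ (valid because $\Sigma_X$ separates points), producing an interpreted type space $\I = (\T, \nu)$, and then pass to the corresponding probability model $\M_{\I}$ via Proposition~\ref{pro:t-m}. The key observation is that $\M_{\I}$ lies in $\mathscr{C}_X$ (the sets $\val{x}_{\M_{\I}} = \{x\} \times T$ partition $\Omega = X \times T$), so by the universality clause of the preceding Proposition the description map $D$ sends $\M_{\I}$ truth-preservingly into $\bar{\M}$. I would then define $f_i(t_i)$ to be the $i$-description of any world in $\M_{\I}$ whose $i$-type is $t_i$ — equivalently, the image of $t_i$ under the composite of the $\I \mapsto \M_{\I}$ correspondence and the description map. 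Truth preservation guarantees this is well-defined (any two worlds with the same $i$-type in $\T$ yield the same $i$-description) and that $f_i(t_i) \in T_i^{\bar{\M}}$. Checking the type-morphism equation $\beta_i'(f_i(t_i))(E) = \beta_i(t_i)(f^{-1}(E))$ then reduces, via Lemma~\ref{lem:gen}, to checking it on generating events of the form $[\phi]$, where it follows from the semantic equivalence of Proposition~\ref{pro:t-m} together with the defining property (\ref{eqn:con}) of the measures $\beta_i$ in $\T_{\bar{\M}}$.

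**The main obstacle I expect** is uniqueness of the morphism, which is where the expressiveness of $\LBI(X)$ — in particular the density of $\Theta = [0,1]$ — does the crucial work. The point is that $f_i(t_i)$, as an $i$-description in $\bar{\M}$, is completely determined by which formulas $B_i^\theta \phi$ it contains, and the type-morphism condition forces $\beta_i'(f_i(t_i))([\phi]) = \beta_i(t_i)(\val{\phi}_{\I})$ for every $\phi$; by Lemma~\ref{lem:pti} this value, read off against the dense threshold set, fixes exactly which modal formulas lie in $f_i(t_i)$, leaving no freedom. I would argue that any two type morphisms must therefore agree on every $t_i$, since each is pinned down formula-by-formula by the probabilities it is required to preserve. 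The subtle part is confirming that the relaxed separability/identification discussed in the footnote does not undermine uniqueness — i.e.\ that distinct types in $T_i^{\bar{\M}}$ genuinely differ on some $\LBI(X)$-formula, which holds by construction since $i$-types \emph{are} $i$-descriptions.
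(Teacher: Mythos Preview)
Your proposal is correct and follows essentially the same approach as the paper: identify the state space of $\T_{\bar{\M}}$ with $X$ via the unique primitive proposition in each $0$-description, then for an arbitrary $\T$ pass through $\I = (\T,\nu)$ with $\nu(x) = \{x\}$ to the model $\M_{\I} \in \mathscr{C}_X$ and define $f_i(t_i)$ as the $i$-description of any world with $i$-type $t_i$. In fact, your outline is more thorough than the paper's own proof, which stops after defining $f_i$ and omits both the verification of the type-morphism equation on generators $[\phi]$ and the uniqueness argument via density of $\Theta$ that you spell out.
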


\begin{proof}
The state space for $\T_{\bar{\M}}$ is, by definition, the collection $\{d_{0}(\bar{\omega}) \: : \: \bar{\omega} \in \bar{\Omega}\}$; it is easy to see that each set $d_{0}(\bar{\omega})$ contains exactly one element of $X$, and this correspondence is
a measurable bijection with measurable inverse.
So $\T_{\bar{\M}}$ has the ``right'' state space.

Next, let $\T = (X, (T_{i})_{i \in I}, (\beta_{i})_{i \in I})$ be any type space based on $X$. We must produce a (unique) type morphism from $\T$ to $\T_{\bar{\M}}$. To do so, define $\nu: \Phi \to \Sigma_{X}$ by $\nu(x) = \{x\}$, let $\I = (\T,\nu)$ be the corresponding interpreted type space, and consider the model $\M_{\I}$ obtained from $\I$ as in Proposition \ref{pro:t-m}. It is easy to see that $\M_{\I} \in \mathscr{C}_{X}$, and because of this, for each $(x,t) \in X \times T$ and $i \in I$, there is a unique $d_{i}(\bar{\omega})$ that is satisfied at $(x,t)$. In this case, define $f_{i}(t_{i}) = d_{i}(\bar{\omega})$.
\end{proof}

Theorem \ref{thm:can} realizes the intuition that the universal type
space for $X$ is precisely the type space that satisfies all and only the
$\LBI(\Phi)$-descriptions that are satisfied in some type space over
$X$. Thinking of universal type spaces in this way makes the
dependence on language plain, and suggests alternative notions of
``universal type spaces'' obtained by varying the language over which
the universal quantification takes place. That is, given a class of
type spaces $\mathscr{T}$ and a language $\mathcal{L}$ interpretable
in those type spaces in $\mathscr{T}$, we can define a type space
$\T^{*}$ to be \defin{universal for $\mathscr{T}$ with respect to
  $\mathcal{L}$} provided every $\mathcal{L}$-description satisfiable
in $\mathscr{T}$ is (uniquely) satisfied in $\T^{*}$. Naturally, we
might hope to construct $\T^{*}$ by transforming an appropriate
canonical/universal model. The translation defined in Section
\ref{sec:tra} does the job for languages of the form $\LBT$ when
$\Theta$ is dense in $[0,1]$. Generalizing this result to other
languages, both richer and poorer, is the subject of ongoing
research.

One natural way to coarsen the language is by dropping the assumption
that $\Theta$ is dense in $[0,1]$. An extreme case of this would be to
take $\Theta = \{1\}$, corresponding to a standard modal language of
qualitative, ``probability $1$'' belief (see, e.g., \cite{Hal31}). In
this case, the sets of measures $\P_{t_{i}}$ defined in Section
\ref{sec:tra} encode only information regarding those events that
$t_{i}$ assigns probability 1 to.
Another natural modification to the language is to enrich it with a
knowledge modality. Logics of knowledge and belief have been
well-studied, 
and canonical models certainly exist in such settings (see \cite{Len}
and the references in \cite[Chapter 8]{Hal31}). By contrast,
\emph{knowledge spaces}, an epistemic analogue to type spaces, have
been shown \textit{not} to permit a universal object \cite{FHV1,HeSa1}. 
What is the source of this mismatch? Does the translation technique we
present fundamentally fail to generalize to models of knowledge? Or
can the canonical model construction in the modal case inform a new,
type-theoretic representation of knowledge that does enjoy a universal
model? We leave these questions to future work.

\section{Conclusion} \label{sec:con}
We have related probability frames and type spaces in a way that makes
clear the critical role of language. Our approach allows us to show 
the deep connections between the canonical models that are
standard in the modal logic community and the universal type spaces
that play a critical role in epistemic game theory. We believe that
further work, considering different choices of language, will further
illuminate the connections between
these two modeling paradigms.

\bibliographystyle{eptcs}
\bibliography{z,joe}

\end{document}